\newcommand{\version}{May 31st, 2018}
\newcommand{\bdm}{\begin{displaymath}}
\newcommand{\edm}{\end{displaymath}}
\newcommand{\bdn}{\begin{eqnarray}}
\newcommand{\edn}{\end{eqnarray}}
\newcommand{\bay}{\begin{array}{c}}
\newcommand{\eay}{\end{array}}
\newcommand{\ben}{\begin{enumerate}}
\newcommand{\een}{\end{enumerate}}
\newcommand{\R}{\mathbb{R}}
\newcommand{\eps}{\varepsilon}
\newcommand{\intR}{\int_{\R ^2}}
\newcommand{\PsiLau}{\Psi_{\rm Lau}}
\newcommand{\cLau}{c _{\rm Lau}}
\newcommand{\cH}{\mathcal{H}}
\newcommand{\cW}{\mathcal{W}}
\newcommand{\ETF}{\mathcal{E} ^{\rm TF}}
\newcommand{\TFmin}{\sigma ^{\rm TF}}
\newcommand{\TFpot}{\Phi ^{\rm TF}}
\newcommand{\STF}{\Sigma ^{\rm TF}}
\newcommand{\Vnuc}{V_{\rm nuc}}
\newtheorem{teo}{Theorem}
\newtheorem{lem}[teo]{Lemma}
\newtheorem{cor}[teo]{Corollary}
\theoremstyle{definition}
\theoremstyle{remark}
\newcommand{\beq}{\begin{equation}}
\newcommand{\eeq}{\end{equation}}
\numberwithin{equation}{section}
\begin{document}

\title{Rigidity of the Laughlin liquid}

\author{Elliott H. Lieb}
\address{Departments of Mathematics and Physics, Princeton University,
Princeton, NJ 08544, USA}
\email{lieb@princeton.edu}

\author{Nicolas Rougerie}
\address{Universit\'e Grenoble Alpes \& CNRS, LPMMC (UMR 5493), B.P. 166, F-38042 Grenoble, France}
\email{nicolas.rougerie@lpmmc.cnrs.fr}


\author{Jakob Yngvason}
\address{Faculty of Physics, University of Vienna, Boltzmanngasse 5, A-1090 Vienna, Austria}
\email{jakob.yngvason@univie.ac.at}



\date{\version}

\begin{abstract}
We consider general $N$-particle wave functions that have the form of a product of the Laughlin state with filling factor $1/\ell$ and an analytic function of the $N$ variables. This is the most general form of a wave function that can arise through a perturbation of the Laughlin state by  external potentials or impurities, while staying in the lowest Landau level and maintaining the strong correlations of the original state. We show that the perturbation can only shift or lower the 1-particle density but nowhere increase it above a maximum value. Consequences of this bound for the response of the Laughlin state to external fields are discussed. 
\end{abstract}

\maketitle
 \begin{center} {\it Dedicated to J\"urg Fr\"ohlich, Thomas Spencer, and Herbert Spohn}
 \end{center}
\bigskip

\section{The Laughlin phase}

In theoretical studies of the fractional quantum Hall effect (FQHE)~\cite{Girvin-04,Goerbig-09,Frohlich-92,Frohlich-95,Jain-07,StoTsuGos-99} Laughlin's wave function(s)~\cite{Laughlin-83,Laughlin-87,Laughlin-99} play a fundamental role. There is such a function for every positive integer $\ell$ and it can be written, in units where the magnetic length is $1/\sqrt{2}$,  as
\begin{equation}\label{laufunc}
\PsiLau =\cLau \prod_{i<j}(z_i-z_j)^{\ell} e^{-\sum_{i=1}^N|z_i|^2/2}
\end{equation}
where the $z_i\in\mathbb C$ are the positions of $N$ particles moving in $\mathbb R^2$, identified with the complex plane, and the constant $\cLau$ is a normalization factor (depending on $N$ and $\ell$). For fermions, $\ell$ is odd and $\geq 3$ (the case $\ell=1$ corresponds to noninteracting fermions), while for bosons $\ell\geq 2$ is even. Bosonic wave functions of this type are potentially relevant for atomic gases in artificial magnetic fields~\cite{Cooper-08,BloDalZwe-08,LewSei-09,MorFed-07,RonRizDal-11,Viefers-08}. The analysis below  applies to Laughlin states of both
symmetry types.

The Laughlin state~\eqref{laufunc} is a special kind of wave function in the Lowest Landau level (LLL) of a Hamiltonian with a strong magnetic field perpendicular to the plane where the particles move.  The general form of such functions is
\begin{equation}\label{genfunc}
\Psi(z_1,\dots,z_N)= A(z_1,\dots, z_N)\, e^{-\sum_{i=1}^N|z_i|^2/2}
\end{equation}
with $A$ analytic, antisymmetric for fermions and symmetric for bosons. Also square integrability is required, but $A$ is not restricted to polynomials. 
If there is strong repulsive two-body interaction between the particles, the Laughlin function ~\eqref{laufunc} is a natural trial function for low energy states. The factors $(z_i-z_j)^{\ell}$ indeed suppress interactions by producing strong correlations between the particle positions. For certain zero-range interactions,~\eqref{laufunc} is even an exact ground state~\cite{Haldane-83,TruKiv-85,PapBer-01}. 

The success of Laughlin's theory of the FQHE fractions $1/\ell$, based on~\eqref{laufunc}, depends crucially on the fact that the Laughlin wave function behaves as an {\it incompressible liquid}, whose response to perturbations and external fields is extremely rigid. This in fact has two distinct aspects:

\smallskip

\noindent \textbf{1.}~The Laughlin wave function is an approximate ground state for the many-body Hamiltonian, and its energy is separated from the rest of the spectrum by a gap.

\smallskip 

\noindent \textbf{2.}~Modifications of the Laughlin wave function that stay within the ground eigenspace of the many-body Hamiltonian 
cannot increase the local one-particle density beyond a fixed value.

\smallskip  

The main evidence for these facts is experimental and numerical. 
For model zero-range interactions, the justification of Property \textbf{1} would amount to a proof of an $N$-independent spectral gap, see e.g.~\cite[Section~2.2]{LewSei-09} or~\cite[Section~2.1]{RouSerYng-13b} and references therein.  We are not aware of a solution to this important problem, but some exact results for low values of $N$ can be found in \cite{MasMatOuv-07,MasMatOuv-13}.

Property \textbf{2} is not always recognized as a separate issue, but is also crucial for Laughlin's original argument supporting quantization of the Hall conductivity~\cite{Laughlin-81,Laughlin-99} and for FQHE physics in general. 
Its importance lies in understanding the effect of an external potential that would try to concentrate the density as much as possible in energetically favorable places. 

If the energy gap in Property \textbf{1} is sufficiently large we can exclude a jump across it and restrict attention to wave functions with the same interaction energy as the Laughlin function. We are thus led to study the set of all normalized wave functions of the form
\begin{equation}\label{fullcorr} 
\Psi_F (z_1,\dots, z_N)= F(z_1,\dots, z_N)\PsiLau (z_1,\dots, z_N)
\end{equation}
with $F$ analytic and symmetric under exchange of the $z_i$.   This form exhausts the class of functions that minimize the magnetic kinetic energy and at the same time avoid repulsive interactions by vanishing at least as $(z_i-z_j)^\ell$ as $z_i$ and $z_j$ come together. In the bosonic case and with $\ell= 2$ these are exactly the ground states of the contact interaction~\cite[Section~2.1]{RouSerYng-13b}. We shall refer to the class of states of the form \eqref{fullcorr} as {\em fully correlated states}.

Physically,~\eqref{fullcorr} includes the addition of \lq quasi-holes\rq\, (zeros of the wave-function) to the Laughlin state, essentially arbitrary correlations between the particles' and quasi-holes' locations being allowed. It is intuitive that this leads to a decrease of the \emph{global} density. It is far from obvious, however,  that no \emph{local} increase of the density can occur anywhere. Several quasi-holes arranged tightly on a circle could, perhaps, increase the density inside the circle. Moreover, $F$ need not contain any zeros at all like, e.g., $\exp(c\sum_i z_i^2)$, which stretches the support of the density in one direction and shrinks it in another.

%

In this paper, we report on recent density bounds for fully correlated states that demonstrate the validity of Property \textbf{2} without invoking Property \textbf{1}. They hold essentially on length scales $O(N^{1/4})$, much smaller than the full extent of the liquid, which is of order $N^{1/2}$, and are related to earlier partial results proved in~\cite{RouYng-14,RouYng-15}. The full proofs are somewhat involved and presented elsewhere~\cite{LieRouYng-17}, but we sketch the main arguments below and discuss physical applications.  

\section{Incompressibility estimates}

In his pioneering paper~\cite{Laughlin-83}, Laughlin already argued that the one-particle density of the state \eqref{laufunc} has the form of a circular droplet of radius $\sqrt {\ell N}$ where the density takes the constant value $(\pi\ell)^{-1}$. The argument was based on the plasma analogy, where the absolute square of the wave function is written as the Gibbs distribution of a classical 2D Coulomb gas, and subsequently treated by a mean-field approximation. 

Recent mathematical analysis~\cite{RouSerYng-13a,RouSerYng-13b, {RouYng-17}} has confirmed the validity of this approximation for the Laughlin states and Laughlin's   quasi-hole states where the prefactor $F$ is a product in which each factor depends on a {\em single} variable, i.e., wave functions of the special form
\begin{equation}\label{eq:QH state}
\Psi_{f^{\otimes N}} (z_1,\dots, z_N)=  \prod_{j=1} f(z_j) \PsiLau (z_1,\dots, z_N) 
\end{equation}
with $f$ being a polynomial. These results however do not imply density bounds for the {\em general} state~\eqref{fullcorr}. The latter could, in particular, be a linear superposition of functions of the form~\eqref{eq:QH state}, and the density of a linear superposition needs not coincide with the linear superposition of the densities, because of quantum-mechanical interferences.

The analysis of~\cite{RouSerYng-13a,RouSerYng-13b}  was generalized to other prefactors of a special kind in~\cite{RouYng-14}. A common feature that emerged was an upper bound on the one-particle density of magnitude $(\pi\ell)^{-1}$, which is the density of the Laughlin state itself. Such a bound was called an \emph{incompressibility estimate} in \cite{RouYng-14} because it is a manifestation of the resistance of the Laughlin state against attempts to compress its density. It relies essentially on the strong correlations of the Laughlin function and the analyticity of $F$ in~\eqref{fullcorr} which, physically,  is due to the strong magnetic field confining the particles to the LLL. Without such a field and with weak correlations, the electron density in a crystal can be arbitrarily high locally, due to constructive interference of Bloch waves.

The mean-field methods of \cite{RouSerYng-13a,RouSerYng-13b,RouYng-14} are not applicable to general prefactors $F$.
The question of a bound on the density for the general case was treated in~\cite{RouYng-15}  with an entirely different technique, rooted in 2D potential theory. The bound obtained was four times the expected optimal value $(\pi\ell)^{-1}$ however. In this paper we explain that an improved version of the potential theoretic method leads to the correct optimal bound for \emph{arbitrary}~$F$. 
%

As indicated by numerical studies~\cite{Ciftja-06}, the incompressibility bound for the density cannot be expected to hold pointwise for finite $N$.  We prove, however, that it holds for local averages. Denote 
\begin{equation}\label{eq:original density}
\rho_F (z) = N \int_{\R ^{2(N-1)}}  \left| \Psi_F \left(z, z_2,\ldots,z_N \right) \right| ^2 dz_2 \ldots dz_N
\end{equation}
the particle density of the state~\eqref{fullcorr}. Our main result is

\begin{teo}[\textbf{Density bound}]\label{main thm0}\mbox{}\\ 
For any $\alpha > 1/4$ and any disk  $D$ of radius $N^{\alpha}$ we have, uniformly in the choice of $F$,  
\begin{equation}\label{bound}
\int_{D} \rho_F \leq \frac{1}{\pi\ell} |D| (1+o(1))
\end{equation}
where $|D|$ is the area of the disk and $o(1)$ tends to zero as $N\to \infty$.
\end{teo}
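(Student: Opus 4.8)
The plan is to recast the density bound as an upper bound on the expected number of particles in the disk $D$, and then to control that number through the holomorphic structure that \eqref{fullcorr} forces, combined with two-dimensional potential theory.

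\emph{Reduction.} Since $\int_D \rho_F = N\,\mathbb{P}(z_1 \in D) = \mathbb{E}\big[\#\{i : z_i \in D\}\big]$ for the probability measure $|\Psi_F|^2\,dz_1\cdots dz_N$ on $\C^N$, it suffices to bound this expectation by $|D|/(\pi\ell)\,(1+o(1))$. Freeze $z_2,\dots,z_N$: the conditional law of $z_1$ then has density proportional to $|H(z_1)|^2 e^{-|z_1|^2}$, where
\[
H(\zeta)=F(\zeta,z_2,\dots,z_N)\,\prod_{j=2}^{N}(\zeta-z_j)^{\ell}
\]
is entire in $\zeta$ and (almost surely) lies in the Bargmann--Fock space of entire functions square-integrable against $e^{-|\zeta|^2}$. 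The Laughlin factors force $H$ to vanish to order at least $\ell$ at every $z_j$, so the zeros of $H$ inside $D$ account for $\ell$ copies of each of $z_2,\dots,z_N$ lying in $D$, the zeros of the analytic prefactor only adding to the count. Since entire functions in this Fock space cannot have zero density much above $1/\pi$ (the critical density for the weight $e^{-|\zeta|^2}$), and each particle contributes $\ell$ zeros, one heuristically obtains particle density $\lesssim 1/(\pi\ell)$ — that is the mechanism, and everything below is about making it rigorous uniformly in $F$ and at finite $N$. Note the difficulty: the crude reproducing-kernel bound $|H(\zeta)|^2 e^{-|\zeta|^2}\le \pi^{-1}\|H\|^2$ only yields $\rho_F\le N/\pi$, so one must genuinely exploit the abundance of zeros of $H$.

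\emph{Potential theory.} Next I would bound the number of zeros of $H$ in $D$ by the growth of $\log|H|$ on circles concentric with $D$, via Jensen's formula, and take the expectation over $z_2,\dots,z_N$. Because $\log|H(\zeta)|=\log|F(\zeta,z_2,\dots,z_N)|+\ell\sum_{j\ge2}\log|\zeta-z_j|$, this expectation is governed by the logarithmic potential $U_F(\zeta)=\int_{\C}\log|\zeta-w|\,\rho_F(w)\,dw$ of the one-particle density together with $V_F(\zeta)=\mathbb{E}\big[\log|F(\zeta,z_2,\dots,z_N)|\big]$, which is subharmonic (its Laplacian is $2\pi$ times the expected density of zeros of $F$, hence nonnegative). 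Carrying this out produces a self-consistent integral inequality for the local mass $n(r):=\int_{D_r}\rho_F$, coupled to the expected zero density of $F$ in $D_r$; controlling the latter needs a companion estimate, again from Jensen applied to $H$, since the total zero count of $H$ is tied to its Fock-space growth. This is in the spirit of \cite{RouYng-15}, which already gives a bound $n(R)\le C\,R^2/\ell$ with a finite but non-optimal constant.

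\emph{The hard part.} Bringing $C$ down to its optimal value $1$ is the crux, and I expect it to be the main obstacle. The self-consistent inequality is scale invariant, so it alone does not fix the constant; the additional inputs must be (i) the normalization $\int_{\C}\rho_F=N$; (ii) the fact that the frozen variables $z_2,\dots,z_N$ are themselves distributed according to a strongly repulsive measure — so that the zeros of $H$ are typically spread out — which, moreover, is self-referential (its density involves $\|H\|^2$); and (iii) a bootstrap that feeds the crude bound back into the potential-theoretic estimate and iterates. One genuinely cannot argue configuration by configuration: a tightly clustered configuration of $z_2,\dots,z_N$ makes the conditional density of $z_1$ spike far above $1/(\pi\ell)$, and only the average over configurations is controlled. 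The remaining error should compare the probe scale $R=N^{\alpha}$ to the $O(N^{1/2})$ overall extent of the droplet and be of relative size roughly $N^{1/2}/R^{2}=N^{1/2-2\alpha}$, which vanishes precisely when $\alpha>1/4$ — the origin of the length-scale restriction. The two points I expect to be genuinely delicate are thus the control of the zeros of a \emph{general} analytic $F$ — for which, unlike the product states \eqref{eq:QH state}, no plasma/mean-field description is available — and the passage from an order-optimal bound to one with the exact constant.
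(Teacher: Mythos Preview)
Your reduction and potential-theoretic setup are sound, and you correctly identify that the Jensen-formula approach in the spirit of \cite{RouYng-15} yields a finite but non-optimal constant (there, $4/(\pi\ell)$). The genuine gap is in what you call ``the hard part'': the proposed bootstrap is not an argument, and no iteration of the self-consistent inequality is known to improve the constant. The scale-invariance you yourself flag is exactly why feeding a density bound back into the Jensen estimate does not tighten it; something qualitatively new is needed, and your items (i)--(iii) do not supply it.

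The paper's route is different in kind. Rather than conditioning on $z_2,\dots,z_N$ and applying Jensen to the entire function $H$, it writes $|\Psi_F|^2$ as a Gibbs measure $\mathcal{Z}_N^{-1}\exp(-H_N/T)$ at temperature $T=N^{-1}$ for a classical 2D Coulomb Hamiltonian with an additional many-body term $W_N=-\tfrac{2}{N}\log|F(\sqrt{N}\,\cdot\,)|$, using only that $W_N$ is superharmonic in each variable. The new idea is a deterministic statement about \emph{ground-state configurations} of $H_N$: to any subset $\{y_1,\dots,y_K\}$ of a minimizer one associates the screening set $\Sigma^{\rm TF}(y_1,\dots,y_K)$ of an auxiliary 2D Thomas--Fermi problem (point nuclei at the $y_i$, electron density constrained to $[0,1]$, neutrality), and shows that no further point of the configuration may lie inside $\Sigma^{\rm TF}$. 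The $K=1$ case of this exclusion rule is the minimal-distance bound behind \cite{RouYng-15}, giving density $\le 4$; allowing all $K$ is precisely what brings the constant down to the optimal~$1$, via a separate argument that a point set obeying the full exclusion rule has asymptotic density at most~$1$ in large disks. The passage from ground states to the Gibbs state is then handled by a Feynman--Hellmann/free-energy argument exploiting $T=N^{-1}\to 0$; it is the entropic error in \emph{that} step which produces the restriction $\alpha>1/4$, not the comparison of probe scale to droplet extent that you propose (the exclusion-rule density bound for ground states holds down to the finest scale).
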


Recall that the magnetic length is $1/\sqrt{2}$. The average can thus be taken on ``mesoscopic scales'', $O(N^{1/4+\eps})$, much smaller than the full extent of the state, $O(N ^{1/2})$, but not quite down to the expected finest scale, $O(1)$. For realistic numbers~\cite{StoTsuGos-99} one can have Quantum Hall systems with $N$ of the order of $10^9$, in which case the ratio between the length scales $N^{1/4}$ and $N^{1/2}$ is about $10^{-2}$. A simple covering argument (reminiscent of the \lq cheese theorem\rq, see~\cite[Section~14.4]{LieSei-09} or~\cite[Theorem~14]{Lieb-76}) shows that~\eqref{bound} implies the analogous result for any open set, not just a disk. 

It follows that, for a continuous confining potential $V$, the smallest potential energy obtainable with a fully correlated state, i.e.,
\begin{equation}\label{eq:energy Nn}
E_V(\ell,N) := \min \left\{ \int_{\R ^2} V \rho_F, \; \Psi_F \mbox{ of the form~\eqref{fullcorr}} \right\},
\end{equation} 
is bounded below by the ``bathtub energy''~\cite[Theorem~1.14]{LieLos-01} 
\begin{equation}\label{bathtub}
E^{\rm bt}_V (\ell) := \min\left\{ \intR V \rho \: \Big| \: 0 \leq \rho \leq \frac{1}{\pi \ell},\ \intR\rho = N \right\},
\end{equation}
i.e. the minimum of the potential energy over all densities $\rho$ bounded by $(\pi\ell)^{-1}$. This lower bound 
\begin{equation}
E_V(\ell,N) \gtrapprox E^{\rm bt}_V (\ell) 
\end{equation}
holds for large $N$ provided $V$ varies only on scales much larger than $N ^{1/4}$.

The bound \eqref{bound} means that any compression of the particle density above the ``magic value'' $(\pi\ell) ^{-1}$ that one could imagine to accommodate the variations of an external potential would make us leave the class of fully correlated states, with corresponding increase in either the magnetic kinetic energy or the interaction energy. Assuming the spectral gap mentioned in Property~\textbf{1}, no such density bump is allowed. 
This justifies two things \emph{a posteriori}:

\smallskip

\noindent $\bullet$ That it is legitimate to neglect disorder in the sample and/or small external electric fields, as is done as a first approximation in the derivation of FQHE wave-functions. 

\smallskip 

\noindent$\bullet$ Laughlin's argument~\cite{Laughlin-81,Laughlin-87} (see also~\cite[Sections 4.4, 9.3 and 9.5]{Jain-07}) that switching on an electric current moves electrons transversally without creating any charge accumulation, and generates a Hall conductivity of value~$1/\ell$.

\smallskip

It has been proposed (see~\cite{Cooper-08,BloDalZwe-08,Viefers-08} for reviews) that Laughlin wave functions could be created in cold atomic gases, either by rapid rotation or by applying artificial magnetic fields. In this context, a magneto-optical trap  confines the gas. Some recent proposals to reach the Laughlin state~\cite{MorFed-07,RonRizDal-11} involve some non-trivial engineering of the trapping potential. How the Laughlin state responds is, therefore, of importance for the experimental set-up. 

In addition, the precursor of FQHE states in a rapidly rotating Bose gas is a Bose-Einstein condensate (see~\cite{LewSei-09,LieSeiYng-09} and references therein). Observing the distinctly flat profile of the Laughlin state would already be a strong indication of the transition to the FQHE regime. A more complete probe could be the response of the gas to variations of the trapping potential: the Bose condensate follows the trap by taking a Thomas-Fermi-like shape (see~\cite{AftBlaDal-05,AftBlaNie-06a,BlaRou-08} and references therein). The Laughlin state essentially does not respond to such variations, as exemplified by our main theorem. 

We also point out that a combination of Theorem~\ref{main thm0} with estimates obtained in~\cite{RouSerYng-13a,RouSerYng-13b} leads to the following improvement of~\cite[Corollary~2.3]{RouYng-14}:

\begin{cor}[\textbf{Optimization of the energy in radial traps}]\label{cor:radial}\mbox{}\\
Let $V(x) = |x| ^s$ with $s>0$. Then the potential energy within the class~\eqref{fullcorr} is minimized by the Laughlin state ($F=1$):
\begin{equation}\label{eq:incomp opt}
\lim_{N\to\infty} \frac{\intR V \rho_1  }{E_V(\ell,N)}  = \lim_{N\to\infty} \frac{E^{\rm bt}_V (\ell)}{E_V(\ell,N)} = 1
\end{equation}
where $E_V(N,\ell)$ and $E ^{\rm bt} _V (\ell)$ are defined in~\eqref{eq:energy Nn} and~\eqref{bathtub} respectively.
%
\end{cor}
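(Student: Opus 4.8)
The plan is to sandwich the minimal energy $E_V(\ell,N)$ of~\eqref{eq:energy Nn} between the bathtub value $\ebt_V(\ell)$ of~\eqref{bathtub} and the Laughlin energy $\intR V\rho_1$, and to show all three agree to leading order. The bound $E_V(\ell,N)\le\intR V\rho_1$ is free, since $F\equiv1$ is admissible in~\eqref{eq:energy Nn}. Because $V(x)=|x|^s$ is radial and increasing, radial rearrangement identifies the bathtub minimizer in~\eqref{bathtub} as the flat droplet $\tfrac1{\pi\ell}\one_{B_R}$, $R:=\sqrt{\ell N}$, whence $\ebt_V(\ell)=\tfrac1{\pi\ell}\int_{B_R}|x|^s\,dx=\tfrac{2}{\ell(s+2)}R^{s+2}=\Theta(N^{(s+2)/2})$; equivalently $\ebt_V(\ell)=s\int_0^\infty r^{s-1}(N-r^2/\ell)_+\,dr$. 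It then suffices to prove the matching lower bound $E_V(\ell,N)\ge\ebt_V(\ell)(1-o(1))$ for every fully correlated $\Psi_F$, and the matching upper bound $\intR V\rho_1\le\ebt_V(\ell)(1+o(1))$ for the Laughlin state.

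For the lower bound I would use the layer-cake identity $\intR V\rho_F=s\int_0^\infty r^{s-1}\bar m_F(r)\,dr$, where $\bar m_F(r):=\int_{\R^2\setminus D_r}\rho_F$ is the mass outside the disk $D_r$ of radius $r$ (expand $|x|^s=\int_0^{|x|}sr^{s-1}dr$ and use $\intR\rho_F=N$). Fixing $\eps\in(0,1/4)$, the range $r<N^{1/4+\eps}$ contributes at most $sN\int_0^{N^{1/4+\eps}}r^{s-1}\,dr=N^{1+s/4+s\eps}=o(\ebt_V(\ell))$ and may be dropped. For $N^{1/4+\eps}\le r\le R$, Theorem~\ref{main thm0} applied to the concentric disks $D_r$ gives $\int_{D_r}\rho_F\le\tfrac{r^2}{\ell}(1+\delta_N)$ with $\delta_N\to0$ uniformly in $F$, so $\bar m_F(r)\ge N-\tfrac{r^2}{\ell}(1+\delta_N)$, which stays nonnegative up to $r_N^\ast:=R(1+\delta_N)^{-1/2}=R(1-o(1))$. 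Integrating this lower bound on $[N^{1/4+\eps},r_N^\ast]$ is an elementary calculation that returns $\ebt_V(\ell)(1-o(1))$, uniformly in $F$.

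For the upper bound I would invoke the density asymptotics for the Laughlin state of~\cite{RouSerYng-13a,RouSerYng-13b}: $\rho_1\to\tfrac1{\pi\ell}\one_{B_R}$ in the mean-field limit, quantitatively enough to provide (a) the ``filling'' bound $\int_{D_r}\rho_1\ge\tfrac{r^2}{\ell}(1-o(1))$ for $0\le r\le R$, and (b) faster-than-polynomial decay of $\rho_1$ beyond $|x|=R$. Splitting $\intR V\rho_1=\int_{B_R}|x|^s\rho_1+\int_{|x|>R}|x|^s\rho_1$, the bulk term is handled by the layer-cake identity $\int_{B_R}|x|^s\rho_1=s\int_0^Rr^{s-1}(\int_{B_R}\rho_1-\int_{D_r}\rho_1)\,dr$ together with $\int_{B_R}\rho_1\le N$ and (a), which gives $\le\ebt_V(\ell)(1+o(1))$; the tail term is $\lesssim R^{s+1}+R=O(N^{(s+1)/2})=o(\ebt_V(\ell))$ by (b), estimating $|x|^s\le2^s(R^s+(|x|-R)^s)$ and the mass of the annulus at radius $R+t$ by $\lesssim Re^{-ct^2}dt$. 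Hence $\ebt_V(\ell)(1-o(1))\le E_V(\ell,N)\le\intR V\rho_1\le\ebt_V(\ell)(1+o(1))$, so both ratios in~\eqref{eq:incomp opt} tend to $1$ (the numerator $\intR V\rho_1$ is squeezed to $\ebt_V(\ell)(1+o(1))$ by the same two inequalities).

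The main obstacle is to match the two inputs to the sharp scale $\Theta(N^{(s+2)/2})$. For the lower bound one must confirm that the $o(1)$ in Theorem~\ref{main thm0} is uniform over the whole window $r\in[N^{1/4+\eps},R]$—not merely at one scale $N^\alpha$—so that it can be integrated against $r^{s-1}$ without introducing an uncontrolled covering constant. For the upper bound the decay of $\rho_1$ past the droplet edge must be \emph{faster than any power} (Gaussian-type): $V=|x|^s$ is unbounded, and a merely polynomial tail estimate would not be integrable against $|x|^s$ out to infinity; one also needs the one-sided filling estimate (a) in a form uniform enough in $r$. Both facts are available in~\cite{RouSerYng-13a,RouSerYng-13b}; everything else—the rearrangement step, the layer-cake identities, and the explicit $r$-integrals—is routine bookkeeping.
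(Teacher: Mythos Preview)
Your proposal is correct and follows exactly the route the paper indicates. The paper does not give a proof of the corollary at all; it only says it ``follows from a combination of Theorem~\ref{main thm0} with estimates obtained in~\cite{RouSerYng-13a,RouSerYng-13b}'', and just before the corollary it already records the consequence $E_V(\ell,N)\gtrapprox E^{\rm bt}_V(\ell)$ of Theorem~\ref{main thm0}. Your sandwich $E^{\rm bt}_V(\ell)(1-o(1))\le E_V(\ell,N)\le \intR V\rho_1\le E^{\rm bt}_V(\ell)(1+o(1))$, with the lower bound coming from Theorem~\ref{main thm0} via the layer-cake formula and the upper bound from the Laughlin density asymptotics of~\cite{RouSerYng-13a,RouSerYng-13b}, is precisely the intended argument, fleshed out. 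Regarding the uniformity concern you flag for the lower bound: the paper notes right after Theorem~\ref{main thm0} that a covering argument extends~\eqref{bound} from disks of radius $N^\alpha$ to arbitrary open sets, which in particular yields the bound for all disks $D_r$ with $r\ge N^{1/4+\eps}$ with an $o(1)$ depending only on $\eps$.
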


It is remarkable that the Laughlin state stays an approximate minimizer in any power-law trap (the result actually holds for more general radial increasing potentials). No matter how steep and narrow a potential well one imposes, it is impossible to compress the Laughlin state while keeping the form~\eqref{fullcorr}, i.e., without jumping across the spectral gap. Extensions of Corollary~\ref{cor:radial} to general traps were recently proved in~\cite{RouYng-17}: For a trap of {\it arbitrary} shape, the minimal energy is asymptotically equal to the bathtub energy and can always be achieved by adding uncorrelated quasi-holes on top of Laughlin's function, i.e., by using wave functions of the form \eqref{eq:QH state}.

\section{Proof strategy: the exclusion rule}

We now turn to sketching the proof of Theorem~\ref{main thm0}. Details are given in the longer paper~\cite{LieRouYng-17}. It is convenient to change variables and consider the scaled $N$-particle probability density
\begin{equation} \label{eq:scaledvar}
\mu_F (Z_N) := N^{N} \left| \Psi_F \left(\sqrt{N} \: Z_N \right) \right| ^2
\end{equation}
corresponding to the wave-function~\eqref{fullcorr}. This has an extension $O(1)$ for the Laughlin state, $F= 1$. Here $Z_N$ stands for $(z_1,\dots, z_N)$. The scaled 1-particle probability density is
\begin{equation} 
\mu_F^{(1)}(z) = \int_{\R ^{2(N-1)}} \mu_{F} (z,z_2,\ldots,z_N)dz_2 \ldots dz_N 
= \rho_F \left( \sqrt{N}  z\right)\label{eq:muF 1}
\end{equation}

The first step is to write the $N$-particle density as a Gibbs factor (Laughlin's plasma analogy),
\begin{equation}\label{eq:plasma analogy}
\mu_F (Z_N) =\mathcal Z_{N}^{-1} \exp\left(\hbox{$ -\frac{1}{T}  H_N(Z_N)$} \right),
\end{equation}
with effective ``temperature'' $T=N^{-1}$ and the \lq Hamiltonian' 
\begin{equation}\label{ham}
H_N(Z_N)=\sum_{j=1} ^N |z_j| ^2  -\frac {2\ell}N \sum_{i<j} \log {|z_i - z_j|}+W_N (Z_N),
\end{equation}
where
\begin{equation} 
W_N (Z_N)=- \frac{2}{N} \log \left| F \left( \sqrt{N} \: Z_N  \right) \right|.
\end{equation}
The term $W_N(Z_N)$ has the important property of being superharmonic in each variable:
\begin{equation}\label{eq:superharmonic}
-\nabla^2_{z_i} W_N(Z_N)\geq 0 \mbox{ for all } i. 
\end{equation}
This holds simply because $F$ is analytic and is, in fact, the only property of $W_N$ that is used in our method.

A precursor of the desired  bound \eqref{bound} for $\rho_F$ is the fact that the local density of points in a minimizing configuration for $ H_N(Z_N)$ is everywhere bounded above by $N(\pi\ell)^{-1}$ for large~$N$. This is the core of the proof of the theorem, and a signature of screening properties of the effective plasma. 
To establish it, we introduce and study an auxiliary minimization problem, which is,  mathematically, a cousin of the Thomas-Fermi energy minimization problem for molecules~\cite{Lieb-81b}.

For $K$ fixed points $x_i\in\mathbb R^2$ (\lq\lq nuclei\rq\rq) we define an energy for functions $\sigma$ on $\mathbb R^2$ (\lq\lq electron density\rq\rq) by
\begin{equation} \label{esigma} 
\ETF [\sigma]=-\int_{\R^2} V_{\rm nucl} (x)\sigma(x)\,dx + D(\sigma, \sigma)
\end{equation}
with
\begin{equation} 
V_{\rm nucl}(x)=-\sum_{i=1}^K \log{|x-x_i|}
\end{equation} 
and 
\begin{equation} 
D(\sigma,\sigma')=-\frac{1}{2} \iint_{\R^2 \times \R^2} \sigma(x)\log{|x-x'|}\sigma'(x')\, dx\, dx'.
\end{equation} 
This functional is to be minimized under the subsidiary conditions
\begin{equation} 
\quad 0\leq \sigma \leq 1,\quad \int_{\R ^2}\sigma =K.
\end{equation} 
In physical terms, this model describes a neutral 2D molecule consisting of fixed nuclei and mobile electrons, with Coulomb interactions. The interpretation of the constraint $0\leq \sigma\leq 1$ is that the kinetic energy of the electrons is zero for densities $\leq 1$ and $\infty$ for densities~$>1$.

The basic facts about this TF model are: 
\begin{enumerate}
 \item There exists a unique minimizer, $\TFmin$.
 \item The minimizer has compact support.
 \item Apart from a set of measure zero, $\TFmin$ {\it takes only the values} 0 or 1.
 \item The Thomas-Fermi  equation holds:
\begin{equation} \label{eq:TF eq}
\TFpot(x)=\begin{cases} \geq 0&\hbox{\rm  if } \sigma^{\rm TF}(x)=1\\ 0&\hbox{\rm if } \sigma^{\rm TF}(x)=0 \end{cases}
\end{equation}
where 
$$\TFpot(x)=\Vnuc(x)+\intR\log{|x-x'|}\sigma^{\rm TF}(x')dx'$$
is the total electrostatic potential of the molecule. 
\end{enumerate}

According to the TF equation the support of $\sigma^{\rm TF}$ is the same as the support of the potential
$\TFpot,$ which is continuous away from the ``nuclei''.
Denote by $\STF (x_1,\dots,x_K)$ the open set where $\TFpot$ is strictly larger than $0$. 
Some important properties are:
\begin{enumerate}
 \item The area of $\STF (x_1,\dots,x_K)$ is equal to $K$.
 \item $\STF (x_1,\dots,x_{K-1})\subset \STF (x_1,\dots,x_K)$.
 \item For a single nucleus at $x_1$, $\STF (x_1)$ is the disc with center $x_1$ and radius $\pi^{-1/2}$.
\end{enumerate}

Consider now a scaled version of \eqref{ham},
\begin{equation}\label{eq:scaled ham} 
\cH (X_N)=\frac \pi 2\sum_{i=1}^N|x_i|^2-\sum_{1\leq i<j\leq N} \log{|x_i-x_j|}+\cW(X_N), 
\end{equation}
with $\cW$ symmetric and superharmonic in each variable $x_i\in\mathbb R^2$ and $X_N=(x_1,\dots,x_N)$. A key property of minimizing configurations of $\cH$ is stated in the following:

\begin{lem}[\textbf{Exclusion rule}]\label{lem:excl}\mbox{}\\
Let $X_N^0 = \{x_1^0,\dots,x_N^0\}$ be a minimizing configuration of points for $\cH$. For any subset of points $y_1,\ldots,y_K,y_{K+1} \in X_N ^0$, 
\begin{equation}\label{exrule} 
y_{K+1} \notin \STF (y_1,\dots,y_{K}).
\end{equation}
\end{lem}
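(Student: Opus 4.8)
The plan is to argue by contradiction and compare energies. Suppose $X_N^0$ is a minimizing configuration but there exist points $y_1,\dots,y_K,y_{K+1}$ among the $x_i^0$ with $y_{K+1}\in\STF(y_1,\dots,y_K)$. The idea is to show that $y_{K+1}$ can be moved to a location that strictly decreases $\cH$, contradicting minimality. The natural candidate displacement is to push $y_{K+1}$ \emph{outside} the support $\STF(y_1,\dots,y_K)$, or rather to exploit the fact that sitting inside $\STF$ costs too much electrostatic energy relative to the ``incompressibility-allowed'' configuration encoded by $\sigma^{\rm TF}$.

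First I would isolate the dependence of $\cH$ on the single variable $y_{K+1}=:x$, writing $\cH(X_N^0) = \tfrac\pi2|x|^2 - \sum_{j\ne K+1}\log|x - x_j^0| + \cW(X_N^0) + (\text{terms independent of }x)$. The potential felt by this particle is $\Phi(x) := \tfrac\pi2 |x|^2 - \sum_{j\ne K+1}\log|x-x_j^0| + \cW_{x}(x)$, where $\cW_x$ denotes $\cW$ as a function of the $(K+1)$-st slot with the others frozen; crucially $-\log|x-x_j^0|$ is superharmonic away from $x_j^0$ and $\cW_x$ is superharmonic by hypothesis, so $\Phi$ is superharmonic except at the finitely many Dirac-type singularities $+\infty$ coming from the other particles, and in fact $-\Delta\Phi = 2\pi N \ge 0$ minus those point masses—so $\Phi$ is superharmonic on the complement of the point set. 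The key comparison is then between the \emph{continuous} TF potential $\TFpot$ built from the nuclei $y_1,\dots,y_K$ (which is $>0$ precisely on $\STF(y_1,\dots,y_K)$, by the TF equation and the remark following it) and the discrete potential generated by the actual neighbouring particles. Here I would use that $X_N^0$ itself, being a minimizer of $\cH$, has a local density bounded by $N/(\pi\ell)$ on the relevant scale — but to avoid circularity one instead feeds in only the subset $\{y_1,\dots,y_K\}$ and the structural fact (property 1 of $\STF$) that $\STF(y_1,\dots,y_K)$ has area exactly $K$, together with property 3 for a single point giving the base case.

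The mechanism I would make precise: the value of $\Phi$ at $x=y_{K+1}$, using superharmonicity, is at least its average over a small circle, and by iterating / using the maximum principle one relates $\Phi$ to the electrostatic energy that $y_{K+1}$ would contribute. Meanwhile, if $y_{K+1}\in\STF(y_1,\dots,y_K)$ then moving $y_{K+1}$ to the boundary $\partial\STF(y_1,\dots,y_K)$ (where $\TFpot=0$) lowers the $y_1,\dots,y_K$-interaction-plus-confinement part of $\Phi$ by a strictly positive amount $\TFpot(y_{K+1})>0$, while the contributions of the \emph{other} $N-K-1$ particles and of $\cW$ can only help, again by superharmonicity of each $-\log|\cdot-x_j^0|$ and of $\cW$ — provided one arranges the displacement to stay within a region where those functions are superharmonic, i.e. not crossing any $x_j^0$. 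One has to be a little careful that the target point can be chosen avoiding the other particles; since $\STF(y_1,\dots,y_K)$ has finite area and its boundary has positive length, a generic radial push works. This produces a configuration with strictly smaller $\cH$, the desired contradiction.

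The main obstacle I expect is making the comparison between the \emph{discrete} potential $-\sum_{j}\log|x-y_j|+\tfrac\pi2|x|^2$ and the \emph{continuous} TF potential $\TFpot$ genuinely rigorous — in particular controlling the short-distance divergences when $x$ approaches one of the $y_j$, and ensuring the sign of the net energy change is unambiguous. The clean way is probably not to move a single particle a finite distance but to set up the argument as a statement about the potential: show that if $y_{K+1}\in\STF(y_1,\dots,y_K)$ then $\Phi$ fails the variational inequality characterising a minimizer of $\cH$ (i.e. $\Phi(y_{K+1})$ is not minimal among nearby admissible positions, using that the full potential restricted to the region where it is superharmonic cannot have an interior minimum). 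Thus the heart of the matter is a maximum-principle argument: a superharmonic function has no interior local minimum, so $y_{K+1}$ cannot sit at a point where the ``$y$-part'' of the potential is strictly above its boundary value $0$ unless the remaining (harmonic-outside-singularities, superharmonic) contributions conspire to create a minimum — and superharmonicity forbids exactly that. Tightening this into an inequality with explicit error terms, valid uniformly in $F$ and for all $N$, is where the real work lies, and is presumably why the full proof is deferred to~\cite{LieRouYng-17}.
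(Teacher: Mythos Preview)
Your overall strategy --- freeze all points but $y_{K+1}$, and show that if $y_{K+1}\in\STF(y_1,\dots,y_K)$ then it can be moved to $\partial\STF$ with a strict energy decrease --- is exactly the paper's. But the decomposition you propose does not close, and the missing ingredient is not a matter of ``tightening error terms'' but a specific algebraic trick.

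You split the one-variable energy $G(x)$ into Part~A $=\tfrac{\pi}{2}|x|^2-\sum_{i=1}^K\log|x-y_i|$ (interaction with the $K$ chosen points plus confinement) and Part~B (everything else), and claim that moving $x$ to $\partial\STF$ lowers Part~A by $\TFpot(x)>0$. This is not correct: Part~A is \emph{not} the TF potential $\TFpot$, and on $\partial\STF$ it is not constant. Moreover, your earlier assertion that the full potential is superharmonic is false precisely because the confinement term $\tfrac{\pi}{2}|x|^2$ is \emph{sub}harmonic, with $-\Delta\big(\tfrac{\pi}{2}|x|^2\big)=-2\pi$. With your split, Part~B is indeed superharmonic on $\STF$ and attains its minimum at some $\tilde x\in\partial\STF$, but you have no control of Part~A at that particular $\tilde x$; the two minima need not coincide.

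The paper's fix is to add and subtract $\int_{\STF}\log|x-x'|\,dx'$, writing $G=\Phi+R$ with
\[
\Phi(x)=-\sum_{i=1}^K\log|x-y_i|+\int_{\STF}\log|x-x'|\,dx',\qquad
R(x)=\tfrac{\pi}{2}|x|^2-\int_{\STF}\log|x-x'|\,dx'+\text{(Part B)}.
\]
Now $\Phi$ \emph{is} exactly $\TFpot$, hence strictly positive on $\STF$ and identically zero on $\partial\STF$. And the point of the subtraction is that on $\STF$ the Laplacian of $\int_{\STF}\log|x-x'|\,dx'$ equals $2\pi$, which exactly cancels the $2\pi$ from the confinement; thus $\tfrac{\pi}{2}|x|^2-\int_{\STF}\log|x-x'|\,dx'$ is \emph{harmonic} on $\STF$, and $R$ is superharmonic there. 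Now any $\tilde x\in\partial\STF$ minimizing $R$ works, since $\Phi(\tilde x)=0<\Phi(y_{K+1})$ automatically.

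With this decomposition the proof is a few lines; there is no ``discrete versus continuous comparison'' to make rigorous, no short-distance divergence to control, and no need to avoid the other $x_j^0$ when choosing $\tilde x$ (if some $x_j^0$ lies on $\partial\STF$ then $R=+\infty$ there and the minimum is elsewhere). Your worry about circularity is also unfounded: no density bound on $X_N^0$ is used anywhere.
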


The proof of this exclusion rule is sketched in the Appendix. To make the result plausible, observe the following: the first term in~\eqref{eq:scaled ham} is the electrostatic potential generated by a constant background of charge density $-1$. Using Equation~\eqref{eq:TF eq}, the electrostatic potential generated by the points $y_1,\ldots,y_K$ is completely screened by the part of the background potential generated by the region $\STF (y_1,\dots,y_{K})$. What the exclusion rule says is that no other point of a minimizing configuration can lie inside this screening region. The reason is two-fold: 

\smallskip

\noindent $\bullet$ if another point $y_{K+1}$ lay inside the screening region one could decrease the sum of the first two terms in~\eqref{eq:scaled ham} by moving $y_{K+1}$ to any position on the boundary.

\noindent $\bullet$ the last term $\cW$ in~\eqref{eq:scaled ham}, being superharmonic, is generated by a positive charge distribution. One can thus always decrease $\mathcal W$ by moving $y_{K+1}$ to some point on the boundary. Such a move decreases at the same time the sum of the first two terms.

\smallskip

The particular case $K=1$ of Lemma~\ref{lem:excl} goes back to an unpublished theorem of Lieb, used in \cite{RouYng-15}: The minimal distance between points in a minimizing configuration of $\cH$ is not less than $1/\sqrt \pi$. This property  shows that the density of points is in any case bounded above by $4$.  The general exclusion rule for all $K$ implies more. The density is, in fact, asymptotically bounded above by $1$:

\begin{lem}[\textbf{Exclusion rule implies density bound}]\label{lem:excl to dens}\mbox{}\\
For $R>0$ let $n(R)$ denote the maximum number of any points 
$\{y_1,\dots,y_n\}$ that a disk $D(R)$ of radius $R$ can accommodate while respecting the exclusion rule~\eqref{exrule}. Then
\begin{equation}\label{3.14}
\limsup_{R\to \infty}\frac {n(R)}{\pi R^2}\leq 1.
\end{equation}
\end{lem}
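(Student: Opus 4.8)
The plan is to pass from the exclusion rule to the geometry of the associated Thomas--Fermi droplet. Fix $R$, and let $P=\{y_1,\dots,y_n\}\subset D(R)$ with $n=n(R)$ be an extremal configuration. Order its points arbitrarily and set $S_k:=\STF(y_1,\dots,y_k)$, $S_0:=\emptyset$. The ingredients recalled above — monotonicity $S_{k-1}\subset S_k$, the area identity $|S_k|=k$, and the fact that each nucleus lies in the interior of its own TF set (because $\TFpot\to+\infty$ at a nucleus) — combined with the exclusion rule in the form $y_k\notin\STF(y_1,\dots,y_{k-1})=S_{k-1}$, force a ``staircase'': $\emptyset=S_0\subsetneq S_1\subsetneq\cdots\subsetneq S_n$ with $y_k\in S_k\setminus S_{k-1}$ and $|S_k\setminus S_{k-1}|=1$. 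Hence $S:=\STF(P)$ splits into $n$ unit-area cells, one about each $y_k$, so that $n=|S|$. Since $n=|S\cap D(R)|+|S\setminus D(R)|\le\pi R^2+|S\setminus D(R)|$, the statement reduces to the \emph{no-overhang} bound $|S\setminus D(R)|=o(R^2)$ as $R\to\infty$.

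To prove this I would use that, outside $D(R)$, the TF potential $\TFpot$ of $P$ solves an obstacle problem: there $\TFpot\ge0$, $\TFpot\equiv0$ on $D(R)^c\setminus S$, $\Delta\TFpot=2\pi$ on $S\setminus D(R)$ (no nuclei lie outside $D(R)$), and $\TFpot=|\nabla\TFpot|=0$ along the free boundary $\partial S$ by $C^{1,1}$ regularity. Comparing $\TFpot$ on an annulus $\{R\le|x|\le R+w\}$ with the radial barrier $\pi(|x|-(R+w))_+^2$, whose Laplacian is $2\pi(1+O(w/R))$, shows the free boundary cannot reach farther than $\sqrt{\TFpot(x_0)/\pi}\,(1+o(1))$ from a point $x_0\in\partial D(R)$; thus $S\subset D\big(R+\sqrt{M_R/\pi}\,(1+o(1))\big)$ with $M_R:=\sup_{\partial D(R)}\TFpot$, except within a bounded neighbourhood of each of the $O(R)$ nuclei lying within $O(1)$ of $\partial D(R)$ — near each such $y_j$ one controls $S$ between $D(y_j,\pi^{-1/2})$ and a fixed dilate of it, using the $K=1$ bound to limit crowding, so these contribute only $O(R)$ to the overhang area. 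It remains to bound $M_R$: by the Thomas--Fermi equation the droplet $S$ screens the nuclei, so away from the nuclei $\TFpot$ equals the Newtonian potential of the balanced signed measure $\one_S\,dx-\sum_j\delta_{y_j}$, whose two parts have equal mass $n=O(R^2)$ and, by the crude $K=1$ density bound, are a priori supported in $D(CR)$; a standard potential estimate then gives $M_R=o(R^2)$. If one application of this estimate only improves the density bound rather than reaching $1$, feed the improved bound back into the control of $M_R$ and iterate; the fixed point of the recursion is the sharp value $1$. Altogether $|S\setminus D(R)|\le 2\pi R\sqrt{M_R/\pi}(1+o(1))+O(R)=o(R^2)$, as wanted.

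The main obstacle is the quantitative control of $M_R$ away from the nuclei. This is exactly where the full exclusion rule, not merely its pairwise consequence, must be used: a configuration that is only pairwise $\pi^{-1/2}$-separated can be packed at density $\approx 2\pi/\sqrt3>1$, and its TF droplet then overhangs $D(R)$ by a distance of order $R$, so no soft argument can do better than density $2\pi/\sqrt3$ — only the exclusion rule for all $K$ rules this out. A secondary difficulty is making the barrier comparison rigorous near nuclei sitting very close to $\partial D(R)$, and, if the bootstrap is needed, organising the recursion so that it converges to $1$ rather than to some $D>1$.
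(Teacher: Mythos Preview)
Your reduction to an overhang bound $|S\setminus D(R)|=o(R^2)$ via $n=|S|$ is clean (though note it uses only the area identity $|\STF(y_1,\dots,y_n)|=n$, not the exclusion rule; the staircase is correct but idle), and the barrier/maximum-principle step reducing the overhang width to $O(\sqrt{M_R})$ with $M_R=\sup_{\partial D(R)}\TFpot$ is essentially correct. The genuine gap is exactly the step you flag as ``the main obstacle'': the claim $M_R=o(R^2)$. Your ``standard potential estimate'' for a balanced signed measure of total variation $O(R^2)$ supported in $D(CR)$ does \emph{not} yield $o(R^2)$---as you point out yourself, a lattice at density $>1$ satisfies the $K=1$ separation yet has $M_R\sim cR^2$ and overhang of order $R$. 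So the full exclusion rule must enter precisely here, and you do not say how. The bootstrap would require an implication of the form ``density $\le D$ on all scales $\Rightarrow M_R\le f(D)R^2$ with $f(D)\to0$ as $D\to1$'', but a global density bound controls at best the \emph{circular average} of $\TFpot$ on $\partial D(R)$ (via Newton's theorem), not its supremum; without pointwise control the barrier argument does not close, and you give no mechanism to upgrade the average to a pointwise bound.

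The paper's route is structurally different and avoids $M_R$ altogether. It argues by contradiction: set $1+\delta=\limsup n(R)/(\pi R^2)$ with $\delta>0$; by maximality of $\delta$, for suitable large $R$ one can arrange density $\ge 1+\delta$ both in $D(R)$ and in a surrounding annulus $\mathcal A$ of width $\sim\delta R$. Applying the exclusion rule with nuclei $\{y_i\}\subset D(R)$ forces $\TFpot(y_j)=0$ at every configuration point $y_j\in\mathcal A$; since those points are $O(1)$-dense (density $\le 4$) and, after rescaling by $R^{-1}$, the gradient of $\TFpot$ is uniformly bounded in $\mathcal A$, the rescaled potential tends to $0$ uniformly there. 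On the other hand, the net charge inside $D(R)$ is at least $n-\pi R^2\ge\delta\pi R^2$ (since $\TFmin\le1$), and Newton's theorem then bounds the circular average of $\TFpot$ just outside $D(R)$ strictly away from $0$---contradiction. The idea your argument is missing is to manufacture \emph{external} configuration points at which the exclusion rule directly pins $\TFpot$ to zero; trying to control $\TFpot$ on $\partial D(R)$ from the internal structure of $P\subset D(R)$ alone does not seem to give enough.
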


We sketch a proof in the Appendix. The main point is that if a large region contains a density of points $>1$ we can reach a contradiction by considering two facts:

\smallskip 

\noindent $\bullet$ the potential $\TFpot$ generated by the points $y_i$ contained in $D(R)$ and the corresponding exclusion set $\STF (y_1,\dots, y_n)$ must vanish at all the points $y_j$ lying outside of $D(R)$, by the exclusion rule and~\eqref{eq:TF eq}. This leads to a uniform upper bound on $\TFpot$ outside of $D(R)$.

\noindent $\bullet$ the same potential is generated by an overall positive charge density, because the total nuclear charge in $D(R)$, which is $ > \pi R^2$ by assumption, is not fully screened by the part of the negative charge density $\TFmin$ lying in $D(R)$, at most equal to the area $\pi R^2$ because $\TFmin\leq 1$. A lower bound on the circular average of $\TFpot$ outside of $D(R)$ follows.

\smallskip

It turns out that, for $R$  large enough, the two estimates obtained in this way, if \eqref{3.14} fails, will contradict one another.  Consequently, \eqref{3.14} must hold.

After scaling, $x\to z=\sqrt{\frac{\pi\ell}N}\,x$,  Lemma 2 applies to the Hamiltonian \eqref{ham} and implies that in any minimizing configuration $\{z_1^0,\dots,z^0_N\}$ of~\eqref{ham} the number of points $z_i^0$ contained in any disc of radius $R \gg N ^{-1/2}$ is not larger than $N(\pi\ell)^{-1}$ times the area of the disc. This is the gist of the proof. From there, the main argument left to conclude the proof of Theorem~\ref{main thm0} is to show that the density bound for ground states of~\eqref{ham} applies also to the Gibbs state~\eqref{eq:plasma analogy}.

In this argument, we use crucially that the temperature $T$ in~\eqref{eq:plasma analogy} scales as $N^{-1}$ so that the Gibbs measure charges mostly ground state configurations for large $N$. Turning this intuition into a proof follows the lines of~\cite[Section~3]{RouYng-15}, see ~\cite[Section~5]{LieRouYng-17} for the  details. In brief, to access the 1-particle density we rely on the fact that a Gibbs state minimizes the free energy of the corresponding Hamiltonian and use a Feynman-Hellmann-type argument: We perturb the Hamiltonian ~\eqref{ham} by adding a term
$\eps\sum_i U(z_i)$ with $U$ of compact support and prove free energy upper and lower bounds for this perturbed Hamiltonian. After dividing by $\eps$  we obtain a bound on $\int U(z)\rho^\eps(z) dz$ where $\rho^\eps$ is the 1-particle density of the Gibbs state for the perturbed Hamiltonian and show that  this tends to $\int U(z) \rho^0(z) dz$ in the limit $\eps\to 0$, where $\rho^0$ is the empirical measure of the ground state.

Strictly speaking, this strategy requires an priori bound ensuring that $\Psi_F$ lives on length scales of order $N^{1/2}$, e.g., 
\begin{equation}\label{eq:mom bound}
\left\langle \Psi_F, L_N \Psi_F \right\rangle \leq C N ^2,  
\end{equation}
where $L_N$ is the total angular momentum operator and $C$ is independent of $N$ and $F$. This asumption is very reasonable physically, and can in fact be eliminated using a suitable localization procedure (\cite{LieRouYng-17}, Section~5.2).

We point out that our density upper bound holds down to the finest possible scale for {\em ground states} of the plasma Hamiltonian~\eqref{ham}, i.e., on length scales $\gg N^{-1/2}$ (see~\cite{RotSer-14} where the corresponding lower bound is proved in the purely Coulombic case $W = 0$). Note that we are here referring to the scaled variables as in ~\eqref{eq:scaledvar}.  When applying this result to {\em Gibbs states} of~\eqref{ham} we have to restrict ourselves to length scales $\gg N ^{-1/4}$ to control the error terms arising from the entropic contribution to the free energy in the Feynman-Hellmann type argument, but this is likely to be due to a technical limitation of our method. It was, indeed, recently proved that, for the purely Coulombic Hamiltonian, the expected microscopic density estimate holds for low temperature Gibbs states~\cite{BauBouNikYau-15,BauBouNikYau-16,Leble-15b,LebSer-16} (see also~\cite{PetRot-16} for ground states of higher dimensional Coulomb and Riesz gases). It remains to be seen whether a combination of our methods with those of~\cite{BauBouNikYau-15,Leble-15b} could improve our results. 
\section{Conclusion}

We have considered perturbations of the Laughlin state that may arise to accommodate external potentials, while keeping the system in the lowest Landau band and preserving the original correlations. We proved rigorously that no such perturbation can raise the particle density anywhere beyond the Laughlin value $1/(\pi \ell)$. This is one of the criteria for the rigidity of the Laughlin state. Our theorem holds on length scales $\gg N^{1/4}$ (in the original, physical variables) which, while large compared to the magnetic length $O(1)$, are microscopic compared to the system's macroscopic size,  $N^{1/2}$.


\bigskip


\noindent\textbf{Acknowledgments:} We received financial support from the French ANR project ANR-13-JS01-0005-01 (N.~Rougerie), the US NSF grant PHY-1265118  (E.~H.~Lieb) and the European Research Council (ERC) under the European Union's Horizon 2020 research and innovation programme (grant agreement CORFRONMAT No 758620). 

\bigskip

\appendix

\section{Proof of the main Lemmas}\label{sec:app}

Here, for completeness, we sketch the proofs of Lemmas~\ref{lem:excl} and~\ref{lem:excl to dens}. The arguments are kept short; full details may be found in the longer paper~\cite{LieRouYng-17}.

\begin{proof}[Proof of Lemma~\ref{lem:excl}]
By symmetry of the Hamiltonian we may, without loss, choose $y_i = x_i ^0, 1\leq i \leq K+1$. Consider then fixing all points but $x_{K+1}^0$. The energy to consider is then
\begin{equation} 
G(x)= \cH(x_1^0,\dots,x_{K}^0,x,x_{K+2}^0,\cdots x_N^0).
\end{equation}
We claim that if $x\in \STF (x_1^0,\dots,x_{K}^0) \equiv \STF$ then there is an $\tilde x\in \partial  \STF$, the boundary of $\STF$, such that $G(\tilde x)<G(x)$. Thus the minimizing point $x_{K+1} ^0$ cannot lie in $\STF$.

To prove the claim, we add and substract a term $-\int_{\STF}\log{|x-x'|}dx'$ to write
$G(x)=\Phi(x)+R(x)$ with
\begin{equation} \label{potential}
\Phi(x)=-\sum_{i=1}^K\log{|x-x_i^0|}+\int_{\STF} \log{|x-x'|}dx'\end{equation}
and
\begin{equation} 
R(x)=\frac{\pi}{2} |x|^2-\int_{\STF} \log{|x-x'|}dx' -\sum_{i=K+2}^N \log{|x-x_i^0|} + W(x) + \hbox{const.}
\end{equation}
Now, $\Phi$ is precisely the TF potential corresponding to \lq nuclear charges' at $x_i^0,\dots x_K^0$. Hence, using~\eqref{eq:TF eq}, $\Phi>0$ on $\STF$ and zero on the boundary $\partial \STF$. 
The first two terms in $R$ are harmonic on $\STF$ when taken together. (The Laplacian applied to the first term gives $2\pi$ and to the second term $-2\pi$ on $\STF$.) The other terms are superharmonic on~$\STF$. Thus, $R$ takes its minimum on the boundary, so there is a $\tilde x\in  \partial \STF$ with $R(x)\geq R(\tilde x)$. On the other hand, $\Phi(x)>0=\Phi(\tilde x)$, so $G(x)>G(\tilde x)$.
\end{proof}

\begin{proof}[Proof of Lemma~\ref{lem:excl to dens}]
 The proof  is by contradiction. Assume that, for some $\delta>0$, there are arbitrary large radii with the property that the disk $D(R)$ contains at least $(1+\delta)\pi R^2$ points. This leads to a contradiction with~\eqref{exrule} and we present here a sketch of the argument. Full details are given in~\cite{LieRouYng-17}.

 By taking the maximal $\delta$ (which is in any case $\leq 3$) we may, without restriction, assume that the density is also at least $(1+\delta)$ in the annulus $\mathcal A$  of width $\delta \cdot R$ around $D(R)$. 
 Since the density is everywhere bounded above by $4$, apart from the points $y_i\in D(R)$, $i=1,\dots,n$ there must be points $y_j\in\mathcal A$, $j=n+1,\dots, m$ in the configuration such that every point in $\mathcal A$ is at most a distance $O(1)$ from one of the $y_j$. The TF potential $\TFpot$ generated by the $y_i$ and the corresponding exclusion set $\STF (y_1,\dots, y_n)$ must vanish at the $y_j$ by the exclusion rule and~\eqref{eq:TF eq}. 
 
On the other hand, after scaling the variables by $R^{-1}$  and extracting a factor $R^2$ one can show that the gradient of the TF potential is uniformly bounded in the scaled annulus $R^{-1}\mathcal{A}$. The distance between the scaled $y_j$ is now $R^{-1}$ so the scaled potential  goes to zero uniformly on $R^{-1}\mathcal{A}$ as  $R\to \infty$. The same holds, then, for the circular average of the scaled potential.

%

We claim, however, that the latter is strictly bounded away from zero close to the radius~$1$ (corresponding to radius $R$ in the unscaled annulus). This follows from Newton's theorem, because the nuclear charge in $D(R)$, which is $(1+\delta)\pi R^2$ by assumption, is not fully screened by the part of the negative charge density $\TFmin$ lying in $D(R)$. This negative charge is at most equal to the area $\pi R^2$ because $\TFmin\leq 1$. There is thus a contradiction for $R$ large enough.
\end{proof}


\end{document}